\documentclass[11pt]{article}
 
\usepackage[margin=1in]{geometry}
\usepackage{microtype}
\usepackage{lmodern}
 
\usepackage{amsmath,amssymb,amsthm,mathtools}
\usepackage{dsfont}
 
\usepackage{graphicx}
\usepackage{booktabs}
\usepackage{pgfplots}
 
\usepackage{xcolor}
\usepgfplotslibrary{groupplots}
\pgfplotsset{compat=1.18}
 
\usepackage[textsize=tiny]{todonotes}
 
\usepackage[sort&compress]{natbib}
\usepackage[colorlinks=true,linkcolor=blue,citecolor=blue,urlcolor=blue]{hyperref}
\usepackage[capitalize,noabbrev]{cleveref}
 
\usepackage{enumitem}

\newcommand{\mP}{\mathbb{P}}
\newcommand{\mE}{\mathbb{E}}

\newcommand{\1}{\mathds{1}}
\newcommand{\iid}{\stackrel{\mathrm{i.i.d.}}{\sim}}

\newcommand{\calX}{\mathcal{X}}
\newcommand{\calG}{\mathcal{G}}
 
\newtheorem{proposition}{Proposition}

\newtheorem{theorem}{Theorem}

\newtheorem{corollary}{Corollary}

\newtheorem{assumption}{Assumption}
\newtheorem{remark}{Remark}

\title{More Permutations Do Not Always Increase Power:\\
Non-monotonicity in Monte Carlo Permutation Tests}

\author{
Suman Cha\textsuperscript{1,\dag}
\quad
Seongchan Lee\textsuperscript{2,\dag}
\quad
Antonin Schrab\textsuperscript{3}
\quad
Ilmun Kim\textsuperscript{2}
\\[0.6em]
\small \textsuperscript{1}Department of Statistics and Data Science, Yonsei University, Seoul, South Korea
\\
\small \textsuperscript{2}Department of Mathematical Sciences, KAIST, Daejeon, South Korea
\\
\small \textsuperscript{3}Department of Computer Science and Technology, University of Cambridge, Cambridge, UK
\\[0.4em]
\small \textsuperscript{\dag}These authors contributed equally to this work.
}

\date{\today}

\begin{document}

\maketitle

\begin{abstract}
Monte Carlo permutation tests are a cornerstone of valid, model-free statistical inference. A widely held practical intuition is that increasing the number of sampled permutations improves test performance, in particular that statistical power tends to increase with the Monte Carlo budget. In this paper, we show that these intuitions are false in general. Leveraging the saw-toothed structure of power arising from distributional discreteness, we provide a simple structural explanation for why power can decrease as the number of sampled permutations increases, and we prove that such decreases occur infinitely often as the Monte Carlo budget grows.
\end{abstract}


\section{Introduction}
 
Permutation tests, originated with \citet{fisher1935design} and \citet{pitman1937significance}, provide exact or finite-sample-valid inference under exchangeability (or invariance) assumptions. Owing to this finite-sample validity, they are widely used across modern statistics, machine learning, and the sciences \citep{arboretti2025review, ritzwoller2024randomization}. In practice, since the full permutation group is often too large to enumerate, practitioners typically employ a Monte Carlo approach by drawing $B$ random permutations \citep{dwass1957modified}. The prevailing heuristic suggests that increasing $B$ yields a ``better'' test, as the Monte Carlo $p$-value concentrates around its full-group counterpart with an error shrinking on the order of $B^{-1/2}$ \citep{jockel1986finite}.

Despite this intuition, the relationship between the Monte Carlo budget $B$ and test power remains only partially understood. Historically, the literature has predominantly focused on exact size control, practical stability \citep{marriott1979barnard}, or conditions that guarantee monotonic power increases under the strict alignment condition $(B+1)\alpha \in \mathbb{N}$. Under this aligned setting, monotonicity has been established assuming monotone likelihood ratios \citep{hope1968simplified} or concavity of the underlying power function \citep{jockel1986finite}. More recently, \citet{gaigall2026number} analyzed a broader resampling framework covering bootstrap and permutation tests, bounding rejection probabilities via properties of the $p$-value distribution such as convexity and the Bernstein property. Taken together, these studies clarify power behavior under specific structural constraints, but they do not characterize the general behavior of power for arbitrary choices of $B$. Crucially, the possibility of non-monotonicity, meaning that increasing the number of sampled permutations could actually reduce the rejection probability under the alternative, has not been systematically investigated.

Related phenomena have been observed in other simulation-based and discrete testing frameworks. \citet{davidson2000bootstrap} analyzed the power loss caused by using a finite number of bootstrap replications, and \citet{chernick2002saw} documented a ``sawtooth'' pattern in the power function of exact proportion tests due to the discreteness of the sampling distribution. While this analysis was confined to the binomial setting without a general structural explanation, it hints at the broader impact of discreteness. Closer to our context, \citet{rindt2021consistency} demonstrated that within permutation testing, the \emph{conditional} rejection probability for a fixed dataset $\calX$ need not increase with the Monte Carlo budget $B$. Our work extends these findings in two directions. First, we demonstrate that the \emph{unconditional} power, averaged over $\calX$ under the alternative, also exhibits non-monotone behavior as a function of $B$. Second, we identify a discrete critical-count structure that explains when and why local downward steps arise along plateaus of the rejection threshold.

A separate line of work has generalized permutation tests along an algebraic axis. \citet{ramdas2023permutation} established finite-sample validity for arbitrary subsets of the permutation group and non-uniform sampling distributions, and \citet{koning2024more} showed that a carefully chosen subgroup can outperform the full group in multiple testing. The power gains in that literature arise from the algebraic structure of the chosen permutations, and are therefore orthogonal to the effect of the Monte Carlo budget $B$ studied here.

Despite the theoretical emphasis on aligned choices with $(B+1)\alpha\in\mathbb{N}$, many applied implementations instead use round Monte Carlo budgets (e.g., $B=10^2,10^3,10^4$) without calibrating $B$ to the nominal level $\alpha$. The goal of this article is to show that non-monotonicity is not a pathological curiosity but a fundamental structural feature of Monte Carlo permutation tests, and to argue that choosing $B$ so that $(B+1)\alpha\in\mathbb{N}$ provides a simple design rule that places the test at a local power maximum under the non-degeneracy condition.

The remainder of the paper is organized as follows. In \Cref{sec: setup}, we review the setup of Monte Carlo permutation tests. In \Cref{sec: nonmonotonicity}, we present our main theoretical results on non-monotonicity and infinitely many local maxima in the power curve. In \Cref{sec: concrete example}, we illustrate these phenomena in a simple closed-form example. We conclude with a discussion in \Cref{sec: discussion}, and all code to reproduce the figures and numerical results is available at \url{https://github.com/suman-cha/mc-perm-power}.

\section{Monte Carlo permutation tests} \label{sec: setup}
 
We briefly review exact permutation tests and their Monte Carlo approximations, fix notation, and introduce the key objects that drive our subsequent analysis of power non-monotonicity. We refer to \citet{lehmann2005testing} and \citet{hemerik2018exact} for comprehensive treatments of permutation testing.

\subsection{Exact permutation tests}
 
Let $\calX$ be observed data (e.g., concatenated samples in a two-sample problem), let $T(\calX)\in\mathbb{R}$ be a test statistic, with larger values indicating stronger evidence against the null hypothesis, and let $\calG$ be a finite group of transformations on the data. In permutation tests, $\calG$ is typically the full permutation group on the sample indices or a subgroup thereof induced by the null hypothesis.
 
The exact permutation $p$-value is defined as
\begin{equation}
\label{eq:p_exact}
p_{\mathrm{exact}}(\calX)
=
\frac{1}{|\calG|}\sum_{\pi\in\calG}\1\{T(\calX^\pi)\ge T(\calX)\}.
\end{equation}
Under standard invariance or exchangeability assumptions, the exact
permutation $p$-value is super-uniform under the null hypothesis:
$\mP(p_{\mathrm{exact}}(\calX) \leq \alpha) \leq \alpha$ for all
$\alpha \in [0,1]$; see \citet{hoeffding1952power}.

\subsection{Monte Carlo approximation}
 
When the permutation group $\calG$ is too large to enumerate exhaustively, one approximates the exact $p$-value by random sampling from $\calG$. Specifically, let $\pi_1,\ldots,\pi_B$ be i.i.d.~random transformations drawn uniformly from $\calG$ (with replacement). The Monte Carlo permutation $p$-value is defined as
\begin{equation}
\label{eq:p_mc}
p_B(\calX)
=
\frac{1+\sum_{i=1}^B \1\{T(\calX^{\pi_i})\ge T(\calX)\}}{B+1},
\qquad
\text{reject if } p_B(\calX)\le \alpha.
\end{equation}
The additive constant ``$+1$'' treats the observed statistic as one of the $B+1$ exchangeable draws (equivalently, it corresponds to augmenting the resampled transformations with the identity), which ensures that $p_B(\calX)$ is never zero and that the resulting test is finite-sample valid under the null \citep{phipson2010permutation,hemerik2018exact}.

To analyze the behavior of this test, it is convenient to separate the randomness coming from the data $\calX$ and the randomness coming from the Monte Carlo sampling. Conditional on the observed data $\calX$, define the single-draw exceedance probability
\begin{align}
\label{eq:q_def}
q(\calX)
\;\coloneqq\;
\mP(T(\calX^\pi)\ge T(\calX)\mid\calX)\in[0,1],
\end{align}
where $\pi$ denotes an independent random transformation drawn uniformly from $\calG$. Thus $q(\calX)$ is the conditional probability that a single randomly sampled permutation produces a test statistic at least as extreme as the observed one.
 
Given $\calX$, the Monte Carlo exceedance indicators $\1\{T(\calX^{\pi_i})\geq T(\calX)\}$ are i.i.d.\ Bernoulli random variables with success probability $q(\calX)$. Consequently, the total number of exceedances
\begin{align*}
R_B(\calX)\;\coloneqq\;\sum_{i=1}^B \1\{T(\calX^{\pi_i})\geq T(\calX)\}
\end{align*}
satisfies the conditional binomial law
\begin{equation*}
R_B(\calX)\mid \calX \sim \mathrm{Binomial}\bigl(B,q(\calX)\bigr).
\end{equation*}
This conditional binomial structure is the key property underlying all subsequent power calculations. Finally, define the integer-valued critical count
\begin{equation*}
k_B \;\coloneqq\; \lfloor (B+1)\alpha\rfloor - 1.
\end{equation*}
With this notation, the Monte Carlo rejection event can be written equivalently as
\begin{equation}
\label{eq:rejection_event}
\{p_B(\calX)\leq \alpha\} 
= 
\{R_B(\calX)\leq k_B\}.
\end{equation}
The non-smooth, integer-valued dependence of $k_B$ on $B$ is the fundamental source of the non-monotonic power behavior analyzed in the remainder of the paper.
 
\paragraph{Conditional and unconditional power.}
For a fixed dataset $\calX$, define the conditional rejection probability
\begin{align*}
\phi_B(\calX) \;\coloneqq\; \mP(p_B(\calX)\le\alpha\mid\calX) = \mP(R_B(\calX)\le k_B\mid\calX),
\end{align*}
and, under an alternative distribution for $\calX$, the unconditional power
\begin{align*}
\mathrm{Pow}(B) \;\coloneqq\; \mE[\phi_B(\calX)].
\end{align*}
Our focus is on how $\mathrm{Pow}(B)$ behaves as a function of the Monte Carlo budget $B$.

\section{Power non-monotonicity and sawtooth behavior in $B$} \label{sec: nonmonotonicity}
 
A central mechanism behind the non-monotonicity studied in this section is the integer-valued rejection threshold $k_B = \lfloor (B+1)\alpha\rfloor - 1$. As $B$ increases, $k_B$ can change only at discrete indices and is otherwise constant over intervals of $B$. Whenever $k_B$ remains unchanged as $B$ increases by one, the rejection rule compares an additional Bernoulli trial to the same critical count. In other words, the exceedance count $R_B$ becomes stochastically larger while the cutoff is held fixed. This mechanism makes rejection \emph{strictly harder}, not easier, for datasets with $q(\calX)\in(0,1)$, and it is a key structural source of the downward steps that occur within each plateau of $k_B$, contributing to the sawtooth behavior of the power curve analyzed below.

\paragraph{Plateaus and jumps of $k_B$.}
The sequence $(k_B)_{B\ge 1}$ is piecewise constant and increases by at most one at each step. More precisely, for $B \geq 2$,

\begin{align*}
k_B - k_{B-1} =
\begin{cases}
1, & \text{if } \lfloor (B+1)\alpha\rfloor = \lfloor B\alpha\rfloor + 1,\\
0, & \text{otherwise}.
\end{cases}
\end{align*}
We call the case $k_B = k_{B-1}$ a plateau and the case $k_B = k_{B-1}+1$ a jump. The evolution of $k_B$ thus consists of flat stretches (plateaus) interspersed with upward increments (jumps), and this discrete structure induces the characteristic sawtooth behavior of the power curve.

Before formalizing this mechanism, we impose a mild non-degeneracy condition on the conditional exceedance probability.
 
\begin{assumption}[Non-degeneracy of the conditional exceedance probability]
\label{assumption: strict prob}
The conditional exceedance probability is not almost surely equal to one, i.e.,
\begin{align*}
\mP\bigl(q(\calX)<1\bigr)>0.
\end{align*}
\end{assumption}
The most interesting regime for power analysis is the non-degenerate case in which the conditional exceedance probability $q(\calX)$ can take values strictly below $1$ with positive probability. Note that the lower bound $q(\calX)>0$ holds automatically. Indeed, the permutation group $\calG$ always contains the identity transformation $\mathrm{id}$, for which $\calX^{\mathrm{id}}=\calX$ and hence $T(\calX^{\mathrm{id}})=T(\calX)$. Consequently,
\begin{align*}
q(\calX)
=
\mP\big(T(\calX^\pi)\ge T(\calX)\mid\calX\big)
\ge
\frac{1}{|\calG|}
>0.    
\end{align*}
Thus, the substantive restriction in \Cref{assumption: strict prob} is that $q(\calX)$ is not almost surely equal to $1$.

If $q(\calX)$ is almost surely equal to $1$, then for every dataset all permutations would produce statistics at least as large as the observed one, implying $p_B(\calX)=1$ for all $B$. The Monte Carlo test would therefore never reject, and the unconditional power would be identically zero and hence trivially monotone in $B$. By contrast, when $q(\calX)<1$ with positive probability, the permutation distribution sometimes contains values strictly smaller than the observed statistic. In this non-degenerate regime, the number of exceedances among the $B$ sampled permutations varies with the Monte Carlo draws, allowing the discrete plateau--jump structure of $k_B$ to translate into genuine non-monotonicity of the unconditional power.

\begin{theorem}[Non-monotonicity and infinitely many local maxima]
\label{thm:nonmonotonicity}
Fix $\alpha\in(0,1)$ and consider the Monte Carlo permutation test~\eqref{eq:p_mc}, with $q(\calX)$, $k_B$, and $\mathrm{Pow}(B)$ as defined in \Cref{sec: setup}. Under \Cref{assumption: strict prob} and $B \geq 2$, the following hold.

\begin{enumerate}[leftmargin=2em]
  \item[(i)] If $B$ satisfies
  \begin{align*}
    k_B-k_{B-1}=1
    \qquad\text{and}\qquad
    k_{B+1}-k_B=0,      
  \end{align*}
  then $\mathrm{Pow}(B)$ is a strict local maximum, i.e.,
  \begin{align*}
    \mathrm{Pow}(B-1)<\mathrm{Pow}(B)
    \quad\text{and}\quad
    \mathrm{Pow}(B+1)<\mathrm{Pow}(B).      
  \end{align*}
  \item[(ii)] There exist infinitely many integers $B$ satisfying the above jump--plateau pattern (that is, a jump at $B$ followed by a plateau at $B+1$, so that $k_{B-1}<k_B=k_{B+1}$). Consequently, the power curve $B\mapsto \mathrm{Pow}(B)$ has infinitely many strict local maxima.
\end{enumerate}
\end{theorem}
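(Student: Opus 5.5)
The plan is to compare the conditional rejection probabilities $\phi_{B-1}(\calX)$, $\phi_B(\calX)$, $\phi_{B+1}(\calX)$ pointwise in $q=q(\calX)$, and then integrate against the law of $\calX$. Write $F_n(k;q)=\mP(\mathrm{Binomial}(n,q)\le k)$, so that $\phi_B(\calX)=F_B(k_B;q(\calX))$.

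For part (i), I will use two elementary coupling facts. First, a plateau step makes rejection harder: for fixed $k\ge 0$ and $q\in(0,1)$,
\[
F_n(k;q)-F_{n+1}(k;q)=q\,\mP(\mathrm{Bin}(n,q)=k)>0 .
\]
To see this, realize $R_{n+1}=R_n+\xi$ with $\xi\sim\mathrm{Bernoulli}(q)$ independent. Then $\{R_{n+1}\le k\}\subset\{R_n\le k\}$, and the difference is the event $\{R_n=k,\xi=1\}$. Second, a jump step makes rejection easier:
\[
F_{n+1}(k+1;q)-F_n(k;q)=(1-q)\,\mP(\mathrm{Bin}(n,q)=k+1)\ge 0 ,
\]
by the same coupling, where the difference is the event $\{R_n=k+1,\xi=0\}$. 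This is strictly positive when $q\in(0,1)$ and $0\le k+1\le n$.

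Applying the first fact with $n=B$, $k=k_B=k_{B+1}$ gives $\phi_B>\phi_{B+1}$ on $\{0<q<1\}$. Applying the second with $n=B-1$, $k=k_{B-1}$ gives $\phi_B>\phi_{B-1}$ on the same event. When $q=1$ both differences vanish, and $q>0$ always holds. Taking expectations and using $\mP(q(\calX)<1)>0$ from \Cref{assumption: strict prob} yields the two strict inequalities. Strict positivity of an expectation of a nonnegative function that is positive on a set of positive measure is standard.

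The main care point is the boundary ranges of the critical count. If $k_B<0$ the test never rejects, so the plateau inequality requires $k_B\ge 0$. This is automatic after a jump, since $k_B=k_{B-1}+1\ge 0$ because $k_{B-1}\ge -1$. For the jump inequality I need $k_{B-1}+1=k_B\le B-1$. Since $k_B\le (B+1)\alpha-1<B$, we get $k_B\le B-1$. Hence both point masses are positive for $q\in(0,1)$.

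For part (ii), let $m_B=\lfloor (B+1)\alpha\rfloor$. Jumps occur exactly at those $B$ for which $(B+1)\alpha$ crosses an integer, meaning some integer $j$ satisfies $B\alpha<j\le (B+1)\alpha$. For each integer $j\ge 1$ with $j/\alpha>2$, set $B_j=\lceil j/\alpha\rceil-1$; this is the unique $B$ at which $m_B$ first reaches $j$, so $k_{B_j}=k_{B_j-1}+1$. The plateau at $B_j+1$ needs $(B_j+2)\alpha<j+1$. We have $B_j+1<j/\alpha+1$, so $(B_j+2)\alpha<j+2\alpha$, which is at most $j+1$ when $\alpha\le 1/2$. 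For $\alpha>1/2$ this can fail, since jumps may occur at consecutive $B$. In that case I will instead argue directly that plateaus occur infinitely often, because $k_B$ grows at rate $\alpha<1$. Among any $\lceil 1/(1-\alpha)\rceil+1$ consecutive steps at least one is a plateau, and jumps also occur infinitely often. So there are infinitely many indices where a jump is immediately followed by a plateau: take the last jump before each plateau. Each such $B\ge 2$ satisfies the hypothesis of (i), giving infinitely many strict local maxima.

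This counting argument, which must cover both small and large $\alpha$ uniformly, is the step I expect to require the most care. The clean version is: both $\{B:\text{jump}\}$ and $\{B:\text{plateau}\}$ are infinite, hence the pattern jump-then-plateau occurs infinitely often.
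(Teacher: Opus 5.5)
Your proposal is correct and follows the paper's proof essentially step for step. Part (i) uses the same coupling $R_{n+1}=R_n+\xi$, giving the jump increment $(1-q)\,\mP(R_{B-1}=k_B\mid\calX)$ and the plateau decrement $q\,\mP(R_B=k_B\mid\calX)$, with the same boundary checks $0\le k_B\le B-1$; part (ii) rests on the same fact that jumps and plateaus of $k_B$ both recur infinitely often, so a jump immediately followed by a plateau also recurs infinitely often (your explicit indices $\lceil j/\alpha\rceil-1$ for $\alpha\le 1/2$, which match \Cref{rem:half-alpha}, and your plateau-gap bound in place of the paper's $k_B/B\to\alpha$ argument are only cosmetic variants).
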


\begin{remark}\normalfont
\label{rem:half-alpha}
When $\alpha\in(0,1/2]$, the two conditions in \Cref{thm:nonmonotonicity}(i) are not independent: the jump condition $k_B-k_{B-1}=1$ alone implies $k_{B+1}-k_B=0$. Indeed, letting $\mathrm{frac}\{x\}\coloneqq x-\lfloor x\rfloor$ denote the fractional part of $x$, a jump at $B$ means $\mathrm{frac}\{B\alpha\}\geq 1-\alpha$, which forces $\mathrm{frac}\{(B+1)\alpha\} = \mathrm{frac}\{B\alpha\}+\alpha-1 \in [0,\alpha)$. A jump at $B+1$ would additionally require $\mathrm{frac}\{(B+1)\alpha\}\geq 1-\alpha$. But for $\alpha \leq 1/2$ we have $[0,\alpha)\subseteq[0,1-\alpha)$, so $\mathrm{frac}\{(B+1)\alpha\}<1-\alpha$, ruling out a jump at $B+1$. Hence, for $\alpha\in(0,1/2]$, every jump index $B$ is automatically a local maximizer of the power curve.
\end{remark}

\paragraph{A simple practical criterion.}
A convenient sufficient condition for the jump--plateau pattern in \Cref{thm:nonmonotonicity}(i) is $(B+1)\alpha\in\mathbb{N}$. Indeed, if $(B+1)\alpha$ is an integer, then $B\alpha$ cannot be an integer (otherwise $\alpha=(B+1)\alpha-B\alpha$ would be an integer, contradicting $\alpha\in(0,1)$), so necessarily $k_B-k_{B-1}=1$ and $k_{B+1}-k_B=0$.

Hence, under \Cref{assumption: strict prob}, every such $B$ is a strict local maximizer of the power curve. As we show below (\Cref{cor:rational_levels}), for the common choices $\alpha\in\{0.1,0.05,0.01\}$ this alignment condition also completely characterizes the set of strict local maxima.
 
The next proposition quantifies the size of each downward step and shows that it is $O(B^{-1/2})$ as $B\to\infty$.

\begin{proposition}
\label{prop:decrease_size_uncond}
In the setting of \Cref{thm:nonmonotonicity}, suppose that $k_{B+1}=k_B=:k$. Then
\begin{align*}
\mathrm{Pow}(B)-\mathrm{Pow}(B+1)
=
\mE\bigl[
  q(\calX)\,
  \mP(\mathrm{Binomial}(B,q(\calX))=k)
\bigr],
\end{align*}
where the inner probability is taken with respect to a $\mathrm{Binomial}(B,q(\calX))$ distribution conditional on $\calX$. Moreover, the decrease admits the distribution-free bound
\begin{align*}
\mathrm{Pow}(B)-\mathrm{Pow}(B+1)
\leq
\frac{1}{\sqrt{2\pi (B+1)}}\sqrt{\frac{\alpha}{1-\alpha}}.
\end{align*}
\end{proposition}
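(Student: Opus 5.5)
The plan is a coupling in $B$ for fixed $\calX$, followed by Fubini and a uniform-in-$q$ bound on a binomial point mass.

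\textbf{The identity.} Fix $\calX$ and write $q=q(\calX)$. Realise $R_{B+1}=R_B+Y$, where $R_B\sim\mathrm{Binomial}(B,q)$ and $Y\sim\mathrm{Bernoulli}(q)$ are independent. This is legitimate because the $(B+1)$-th permutation is an independent draw.

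Since the cutoff is the same $k$ at both $B$ and $B+1$, the events are nested:
\begin{align*}
\{R_{B+1}\le k\}\subseteq\{R_B\le k\},
\qquad
\{R_B\le k\}\setminus\{R_{B+1}\le k\}=\{R_B=k,\;Y=1\}.
\end{align*}
By independence this gives
\begin{align*}
\phi_B(\calX)-\phi_{B+1}(\calX)=q\,\mP(\mathrm{Binomial}(B,q)=k).
\end{align*}
If $k=-1$, both sides are zero. Taking expectations over $\calX$, which is valid since everything lies in $[0,1]$, yields the first claim.

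\textbf{The bound: rewriting the integrand.} It suffices to bound $\sup_{q\in[0,1]} q\binom{B}{k}q^k(1-q)^{B-k}$ when $k\ge0$. Set $n=B+1$ and $j=k+1$. Using $\binom{n}{j}=\frac{n}{j}\binom{B}{k}$, the integrand becomes
\begin{align*}
q\,\mP(\mathrm{Binomial}(B,q)=k)=\frac{j}{n}\,\mP(\mathrm{Binomial}(n,q)=j).
\end{align*}
By the definition of $k_B$ we have $1\le j=\lfloor n\alpha\rfloor\le n\alpha<n$, so in particular $1\le j\le n-1$. The binomial mass at $j$ is maximised over $q$ at $q=j/n$. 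The bound therefore reduces to a local-limit-type estimate:
\begin{align*}
\binom{n}{j}\Bigl(\frac{j}{n}\Bigr)^{j}\Bigl(\frac{n-j}{n}\Bigr)^{n-j}\le\sqrt{\frac{n}{2\pi j(n-j)}}.
\end{align*}

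\textbf{Main obstacle: the point-mass estimate.} I would prove this with Robbins' two-sided Stirling bounds,
\begin{align*}
\sqrt{2\pi m}\,(m/e)^m e^{1/(12m+1)}\le m!\le\sqrt{2\pi m}\,(m/e)^m e^{1/(12m)}.
\end{align*}
Apply the upper bound to $n!$ and the lower bounds to $j!$ and $(n-j)!$. The powers of $n,j,n-j$ and of $e$ cancel exactly against $(j/n)^j((n-j)/n)^{n-j}$. What remains is the factor $\exp\{1/(12n)-1/(12j+1)-1/(12(n-j)+1)\}$. This factor is at most $1$, because $j\le n-1$ gives $12j+1\le 12n$, so already $1/(12j+1)\ge 1/(12n)$.

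\textbf{Conclusion.} Combining the pieces,
\begin{align*}
\frac{j}{n}\sqrt{\frac{n}{2\pi j(n-j)}}=\frac{1}{\sqrt{2\pi n}}\sqrt{\frac{j/n}{1-j/n}}\le\frac{1}{\sqrt{2\pi(B+1)}}\sqrt{\frac{\alpha}{1-\alpha}}.
\end{align*}
The last step uses $j/n\le\alpha$ and the fact that $x\mapsto x/(1-x)$ is increasing on $[0,1)$. Since the bound is uniform in $q$, it passes through the expectation over $\calX$, which completes the proof.
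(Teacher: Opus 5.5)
Your proposal is correct and follows essentially the same route as the paper. Both arguments use the coupling $R_{B+1}=R_B+Y$, the identity $\binom{B+1}{k+1}=\frac{B+1}{k+1}\binom{B}{k}$ to reduce to a $\mathrm{Binomial}(B+1,\cdot)$ mass at $q^\star=(k+1)/(B+1)$, Robbins' Stirling bounds, and finally $q^\star\le\alpha$ with monotonicity of $x\mapsto x/(1-x)$. The differences are cosmetic: you apply the binomial-coefficient identity before maximizing over $q$ rather than after, and your check that the exponential factor is below $1$ (via $12j+1\le 12n$) is a bit more direct than the paper's harmonic-mean argument.
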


\paragraph{A refined design criterion.}
\Cref{thm:nonmonotonicity} guarantees that every jump--plateau index is a strict local maximum of the power curve. The following result provides an explicit characterization of the complete set of such indices. For $\alpha\in(0,1)$, let
\begin{equation}
\label{eq:Balpha_frac}
\mathcal{B}_\alpha 
=
\bigl\{
  B\in\mathbb{N}: \mathrm{frac}\{(B+1)\alpha\}<\min(\alpha,1-\alpha)
\bigr\}.
\end{equation}
Indeed, a jump at $B$ is equivalent to $\mathrm{frac}\{B\alpha\}\geq 1-\alpha$, which forces $\mathrm{frac}\{(B+1)\alpha\}=\mathrm{frac}\{B\alpha\}+\alpha-1\in[0,\alpha)$. A subsequent plateau at $B+1$ is equivalent to $\mathrm{frac}\{(B+1)\alpha\}<1-\alpha$. Combining these conditions yields \eqref{eq:Balpha_frac}.

For $\alpha\in(0,1/2]$, \Cref{rem:half-alpha} shows the plateau condition is automatically satisfied, so \eqref{eq:Balpha_frac} simplifies to
\begin{equation}
\label{eq:Balpha_half}
\mathcal{B}_\alpha
=
\bigl\{
  B\in\mathbb{N}: \mathrm{frac}\{(B+1)\alpha\}<\alpha
\bigr\}
=
\bigl\{
  \lceil j/\alpha\rceil - 1 : j\in\mathbb{N},\, j\geq 1
\bigr\},
\end{equation}
where the second equality holds since $\mathrm{frac}\{(B+1)\alpha\}<\alpha$ if and only if $B+1=\lceil j/\alpha\rceil$ for $j\coloneqq\lfloor (B+1)\alpha\rfloor\geq 1$.
 
\begin{corollary}[Characterization for rational levels]
\label{cor:rational_levels}
When $\alpha=1/m$ for an integer $m\geq 2$, equation~\eqref{eq:Balpha_half} reduces to
\begin{equation}
\label{eq:Balpha_rational}
\mathcal{B}_\alpha
=
\bigl\{
  B\in\mathbb{N}: (B+1)\alpha\in\mathbb{N}
\bigr\}.
\end{equation}
\end{corollary}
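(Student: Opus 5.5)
We need to prove Corollary: for $\alpha=1/m$, $m\ge2$, the set $\mathcal{B}_\alpha$ in \eqref{eq:Balpha_half} equals $\{B:(B+1)/m\in\mathbb{N}\}$. Since $\alpha=1/m\le 1/2$, we may start from the simplified description \eqref{eq:Balpha_half}, which is already justified via \Cref{rem:half-alpha}. The plan is to show the two inclusions directly from the fractional-part condition $\mathrm{frac}\{(B+1)/m\}<1/m$.

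The key observation is that for any integer $n=B+1\ge 1$, the quantity $n/m$ has fractional part in the discrete set $\{0,1/m,2/m,\dots,(m-1)/m\}$, namely $\mathrm{frac}\{n/m\}=r/m$ where $r\in\{0,\dots,m-1\}$ is the remainder of $n$ modulo $m$. Hence $\mathrm{frac}\{n/m\}<1/m$ holds if and only if $r/m<1/m$, i.e.\ $r=0$, i.e.\ $m$ divides $n$, i.e.\ $(B+1)\alpha\in\mathbb{N}$. Note also that $(B+1)\alpha\ge 1$ is then automatic for an integer multiple with $B\ge 1$, so the requirement $j\ge1$ in \eqref{eq:Balpha_half} causes no discrepancy (and $(B+1)\alpha=0$ is impossible as $B+1\ge 2$).

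Alternatively, and as a consistency check, I would verify via the second description in \eqref{eq:Balpha_half}: with $\alpha=1/m$, $\lceil j/\alpha\rceil-1=jm-1$, so $\mathcal{B}_\alpha=\{jm-1:j\ge1\}$, which is exactly the set of $B$ with $m\mid B+1$. There is no real obstacle here; the only point requiring care is the discreteness argument that the fractional part of $n/m$ cannot lie strictly between $0$ and $1/m$, which is immediate from the division algorithm. Finally I would note, combining with \Cref{thm:nonmonotonicity}(i) and \Cref{rem:half-alpha}, that these are precisely the jump indices and hence the strict local maximizers identified by the theorem.
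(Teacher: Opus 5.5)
Your proposal is correct, and your consistency check ($\lceil j/\alpha\rceil-1=jm-1$, so $\mathcal{B}_\alpha=\{jm-1: j\geq 1\}$) is exactly the paper's one-line argument via the second description in \eqref{eq:Balpha_half}. Your primary route reads the first description instead, using $\mathrm{frac}\{(B+1)/m\}\in\{0,1/m,\dots,(m-1)/m\}$, so that $\mathrm{frac}\{(B+1)/m\}<1/m$ forces $m \mid B+1$; this is an equally immediate variant, and either argument suffices.
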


The common levels $\alpha\in\{0.1,0.05,0.01\}$ are all of the form $\alpha=1/m$ and thus fall under \Cref{cor:rational_levels}, which follows immediately from \eqref{eq:Balpha_half}: when $\alpha=1/m$, the ceiling $\lceil j/\alpha\rceil=jm$ is always an integer, giving $B=jm-1$, i.e., $(B+1)\alpha=j\in\mathbb{N}$. Together with the direct argument above, this shows that for these levels the alignment condition $(B+1)\alpha\in\mathbb{N}$ not only provides a sufficient condition for a strict local maximum, but also exactly characterizes the entire set $\mathcal{B}_\alpha$.

The results above establish that non-monotonicity of power is a structural feature of Monte Carlo permutation tests and provide sharp finite-$B$ bounds on the size of individual downward steps. We now turn to concrete examples that make this mechanism fully explicit.

\section{A concrete example} \label{sec: concrete example}
We now illustrate the non-monotonicity mechanism in a fully explicit example at level $\alpha=0.05$. This example serves two purposes: first, it yields a closed-form expression for the unconditional power; second, it makes transparent how the integer-valued rejection threshold induces a deterministic sawtooth dependence of the power on the Monte Carlo budget $B$.

Fix an integer $n\geq 2$ and consider a two-group experiment with $n$ units per group and binary outcomes. Under the alternative, the treated group has success probability $p_1\in(0,1)$ and the control group has success probability $p_0=0$.

Let the first $n$ indices correspond to the treated group and the remaining $n$ to the control group. Specifically, define
\begin{align*}
X_1,\ldots,X_n \iid \mathrm{Bernoulli}(p_1), 
\qquad 
Y_1,\ldots,Y_n \iid \mathrm{Bernoulli}(0),    
\end{align*}
and form the pooled outcome vector
\begin{align*}
Z=(Z_1,\ldots,Z_{2n})
\coloneqq
(X_1,\ldots,X_n,Y_1,\ldots,Y_n)
\in\{0,1\}^{2n}.    
\end{align*}
Let
\begin{align*}
S \coloneqq \sum_{i=1}^n Z_i = \sum_{i=1}^n X_i \sim \mathrm{Binomial}(n,p_1)    
\end{align*}
denote the observed number of treated successes. 
Since $Z_{n+1}=\cdots=Z_{2n}=0$ almost surely, the total number of successes in the pooled sample equals $S$. Let $\mathcal{I} \coloneqq \{ I\subset[2n]: |I|=n\}$ denote the collection of all possible treatment labelings of size $n$, so that $|\mathcal I|=\binom{2n}{n}$. The original treatment group is indexed by $I_0 = \{1,\ldots,n\}.$ For any $I\in\mathcal I$, define the treated success count
\begin{align*}
T_I(Z) \coloneqq \sum_{i\in I} Z_i,    
\end{align*}
that is, the number of successes assigned to the treated group under labeling $I$. In particular, under the observed labeling we have $T_{I_0}(Z)=S$.
 
\begin{figure}[t]
\centering
\includegraphics[width=0.9\textwidth]{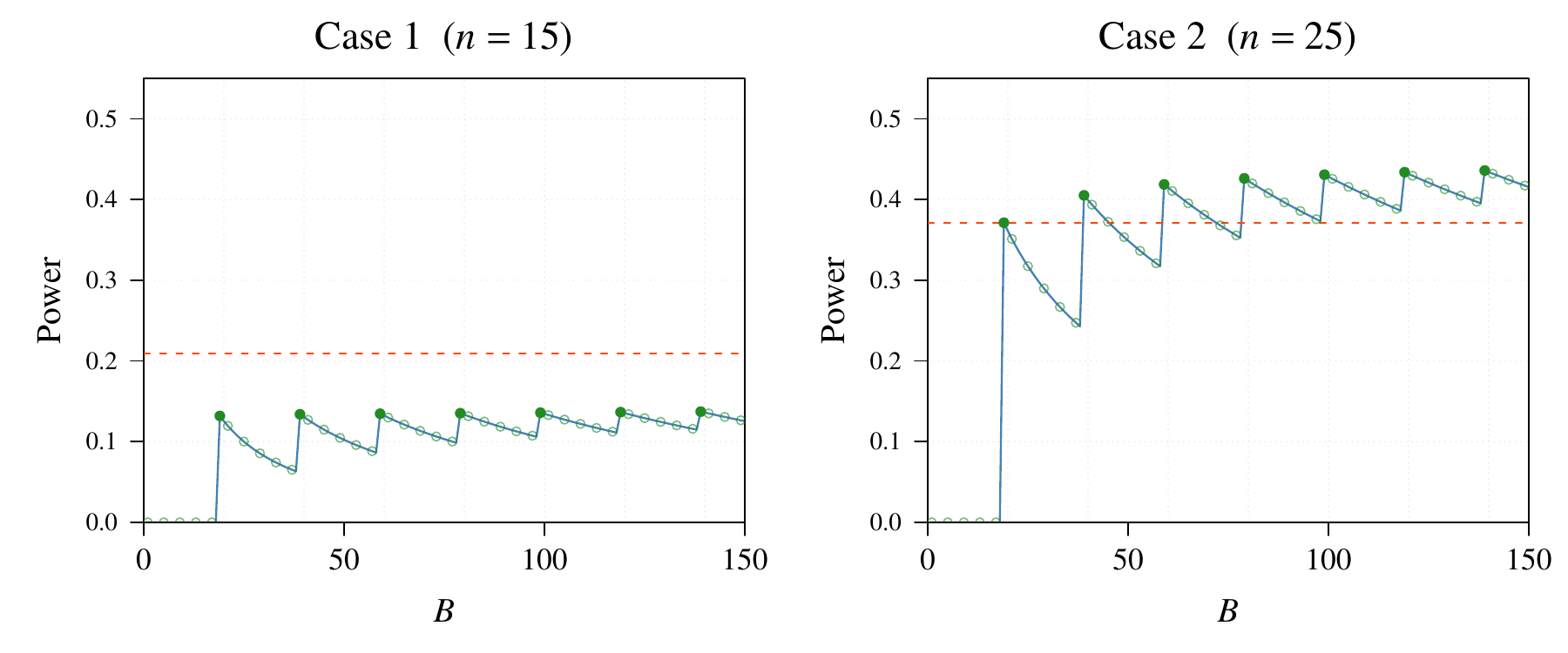}
\caption{Unconditional power curves for the closed-form Bernoulli example at level $\alpha=0.05$ with $p_0=0$ and $p_1=0.16$. Solid line: Monte Carlo power $\mathrm{Pow}(B)$. Dashed line: exact-test power $\mathrm{Pow}_{\mathrm{exact}}$. Dots mark the predicted local maxima where $(B+1)\alpha\in\mathbb N$.}
\label{fig:bernoulli_two_cases}
\end{figure}

\paragraph{Conditional exceedance probability.}
Given $S=s$, the conditional exceedance probability is
\begin{align*}
q(s) \;\coloneqq\; \mP(T_\pi(Z)\ge T_{I_0}(Z)\mid S=s) = \mP(T_\pi(Z)=s\mid S=s),
\end{align*}
where $\pi$ is uniform on $\mathcal I$. The event $\{T_\pi(Z)=s\}$ occurs if and only if the random labeling places all $s$ success indices in the treated group. There are $\binom{2n-s}{n-s}$ such labelings, hence
\begin{align*}
q(s)=
\frac{\binom{2n-s}{n-s}}{\binom{2n}{n}}
\in (0,1] \qquad \text{for $s = 0, 1,\ldots, n$}.
\end{align*}
 
\paragraph{Closed-form power expression.}
With Monte Carlo sampling with replacement,
\begin{align*}
R_B \mid S=s \sim \mathrm{Binomial}\bigl(B,q(s)\bigr), \qquad k_B \coloneqq \lfloor (B+1)\alpha\rfloor - 1,
\end{align*}
and the conditional rejection probability is
\begin{align*}
\phi_B(s) = \mP(R_B \le k_B \mid S=s) = F_{B,q(s)}(k_B),
\end{align*}
where $F_{B,q(s)}$ denotes the cumulative distribution function of $\mathrm{Binomial}(B,q(s))$. Averaging over the data-generating distribution of $S$ yields the unconditional power
\begin{equation}
\label{eq:bernoulli_unconditional_power}
\mathrm{Pow}(B)
=
\mE\big[\phi_B(S)\big]
=
\sum_{s=0}^n
\binom{n}{s}p_1^s(1-p_1)^{n-s}
\,F_{B,q(s)}(k_B).
\end{equation}
This expression is fully explicit and closed-form.
 
\paragraph{Non-monotonicity.}
Here $q(0)=1$, while $q(s)\in(0,1)$ for every $s\in\{1,\dots,n\}$. Thus the non-degeneracy condition \Cref{assumption: strict prob} holds as soon as $\mP(S\ge 1)>0$ (i.e., $p_1\in(0,1)$). In that case, \Cref{thm:nonmonotonicity} implies that the unconditional power in~\eqref{eq:bernoulli_unconditional_power} need not be monotone non-decreasing in $B$.

In this closed-form example, the dependence of $\mathrm{Pow}(B)$ on $B$ is entirely driven by the integer-valued cutoff $k_B$, and the power curve therefore exhibits the deterministic sawtooth pattern described earlier, with strict local maxima at those $B$ for which $(B+1)\alpha\in\mathbb N$.
 
The qualitative shape of this sawtooth curve, however, does not by itself determine how the Monte Carlo test compares to the exact permutation test. Depending on $(n,p_1)$, the Monte Carlo power can lie above or below the exact power curve for substantial ranges of $B$. We illustrate these two contrasting behaviors below and also in \Cref{sec: convergence}.

\paragraph{Case 1: Exact test has larger power ($n=15$).}
Take $n=15$ and $p_1=0.16$ at level $\alpha=0.05$. In this setting,
\begin{align*}
q(3) \approx 0.1121 > \alpha,
\qquad
q(4) \approx 0.0498 < \alpha,
\end{align*}
so the exact permutation test rejects if and only if $S\ge4$. Consequently,
\begin{align*}
\mathrm{Pow}_{\mathrm{exact}} = \mP(S\ge4) \approx 0.209218.
\end{align*}
In this configuration, the Monte Carlo power curve lies entirely below the exact power curve for all $B$ (see the left panel of \Cref{fig:bernoulli_two_cases}), although it converges to $\mathrm{Pow}_{\mathrm{exact}}$ as $B\to\infty$, as shown in \Cref{sec: convergence} for a broader range of $B$.
 
\paragraph{Case 2: Exact test has smaller power ($n=25$).}
Now let $n=25$ and $p_1=0.16$ at the same level $\alpha=0.05$. Then $S\sim\mathrm{Binomial}(25,0.16)$ and
\begin{align*}
q(4) \approx 0.0549 > \alpha,
\qquad
q(5) \approx 0.0251 < \alpha.
\end{align*}
Hence the exact permutation test rejects if and only if $S\ge5$, giving
\begin{align*}
\mathrm{Pow}_{\mathrm{exact}} = \mP(S\ge5) \approx 0.370668.
\end{align*}
In this configuration, for moderate values of $B$, the Monte Carlo test can be strictly more powerful than the exact test (see the right panel of \Cref{fig:bernoulli_two_cases}). Nevertheless, as $B\to\infty$, the Monte Carlo power converges to the exact power, i.e., $\mathrm{Pow}(B) \to \mathrm{Pow}_{\mathrm{exact}}$ (see \Cref{sec: convergence} for extended results).
 
Appendix~\ref{sec: numerical_examples} further demonstrates that this non-monotonic behavior also arises for a wider class of test statistics commonly used in practical applications.

\section{Main takeaways} \label{sec: discussion}
Our results should not be interpreted as ``more permutations are bad.'' 
In many respects, the opposite is true: increasing the Monte Carlo budget $B$ reduces the simulation variability of the Monte Carlo $p$-value and makes $p_B(\calX)$ a more stable approximation to the full-group permutation $p$-value. It also refines the discrete set of attainable Monte Carlo $p$-values. The focus of this paper is more specific. Because the rejection threshold is discrete, increasing $B$ does not in general guarantee a monotone increase in power. With that in mind, the main practical messages are as follows.

\begin{itemize}
\item The number of permutations $B$ is part of the test definition and should be prespecified. Although Monte Carlo error decreases as $B$ grows, the induced decision rule $\1\{p_B(\calX)\le \alpha\}$ need not become more powerful in a monotone way. Thus ``more'' does not necessarily mean ``better'' when power at level $\alpha$ is the primary objective. A natural design recommendation is to choose $B$, subject to computational constraints, so that $(B+1)\alpha\in\mathbb{N}$. This alignment is consistent with earlier analyses \citep{hope1968simplified,jockel1986finite}, and our results provide a complementary structural reason for it. 
\item If one is willing to introduce auxiliary randomization, we recommend using the randomized Monte Carlo permutation $p$-value
\begin{align*}
  p_B^{\mathrm{rand}}(\calX) = p_B(\calX) - U\frac{1 + \sum_{i=1}^B \1\{T(\calX^{\pi_i}) = T(\calX)\}}{B+1},
\end{align*}
where $U\sim\mathrm{Uniform}(0,1)$ is independent of $\calX$. Under the null, this test rejects with probability exactly $\alpha$ \citep[Proposition~3]{hemerik2018exact} and its power is no smaller than that of the nonrandomized test. Intuitively, the extra randomization smooths the discreteness and correspondingly dampens the sawtooth pattern in the power curve.
\item Rather than fixing $B$ in advance, sequential procedures \citep{besag1991sequential,gandy2009sequential,fischer2025sequential} can adaptively determine how many permutations to draw, based on the observed data, while controlling type I error at level $\alpha$. These methods can be more computationally efficient than fixed-$B$ tests when the decision is relatively clear, that is, when the resulting $p$-value is likely to be very small or very large.

\end{itemize}
 
We hope that our results will encourage practitioners to think more carefully about the choice of $B$ and to consider alternative approaches when appropriate. In particular, the non-monotonicity phenomenon highlights the importance of understanding the interplay between computational resources and statistical properties in permutation testing.

\bibliographystyle{apalike} 
\bibliography{reference}
 
\appendix

\newpage
\section{Proofs}
This section contains the proofs of \Cref{thm:nonmonotonicity} and \Cref{prop:decrease_size_uncond}.

\subsection{Proof of \Cref{thm:nonmonotonicity}}
\begin{proof}
Recall that $k_B = \lfloor(B+1)\alpha\rfloor - 1$. For $B \geq 2$, the difference between successive critical counts is given by
\begin{align*}
k_B - k_{B-1} =
\begin{cases}
1, & \text{if } \lfloor (B+1)\alpha\rfloor = \lfloor B\alpha\rfloor + 1,\\
0, & \text{otherwise}.
\end{cases}
\end{align*}
We prove the two claims in turn. Fix $B \geq 2$ such that
\begin{align*}
k_B-k_{B-1}=1
\qquad\text{and}\qquad
k_{B+1}-k_B=0.    
\end{align*}
Let $A:=\{q(\calX)<1\}$, which satisfies $\mP(A)>0$ by \Cref{assumption: strict prob}. Conditionally on $\calX$, let
\begin{align*}
R_{B-1}\sim\mathrm{Binomial}(B-1,q(\calX)),
\qquad
R_B\sim\mathrm{Binomial}(B,q(\calX)).    
\end{align*}
Then
\begin{align*}
\phi_{B-1}(\calX)
=\mP(R_{B-1}\le k_{B-1}\mid\calX),
\qquad
\phi_B(\calX)
=\mP(R_B\le k_B\mid\calX).    
\end{align*}
Since $k_B=k_{B-1}+1$, moving from $B-1$ to $B$ adds one Bernoulli trial and relaxes the rejection threshold by one. Couple $R_B = R_{B-1} + Y$, where $Y \sim \mathrm{Bernoulli}(q(\calX))$ is independent of $R_{B-1}$ given $\calX$. Then we can decompose the probability as follows:
\begin{align*}
\phi_B(\calX)
&= \mP(R_{B-1} + Y \le k_{B-1} + 1 \mid \calX) \\
&= \mP(Y=0 \mid \calX) \mP(R_{B-1} \le k_{B-1} + 1 \mid \calX) + \mP(Y=1 \mid \calX) \mP(R_{B-1} \le k_{B-1} \mid \calX) \\
&= (1-q(\calX)) \big( \mP(R_{B-1} \le k_{B-1} \mid \calX) + \mP(R_{B-1} = k_{B-1} + 1 \mid \calX) \big) \\
&\quad + q(\calX) \mP(R_{B-1} \le k_{B-1} \mid \calX) \\
&= \mP(R_{B-1} \leq k_{B-1} \mid \calX) + (1-q(\calX))\mP(R_{B-1} = k_{B-1} + 1 \mid \calX) \\
&= \phi_{B-1}(\calX) + (1-q(\calX))\mP(R_{B-1} = k_{B-1} + 1 \mid \calX).
\end{align*}
On the set $A$, we have $1-q(\calX)>0$. Furthermore, since $k_{B-1} + 1 = k_B \leq B-1$, the binomial mass $\mP(R_{B-1} = k_{B-1} + 1 \mid \calX)$ is strictly positive. This strictly increases the binomial left-tail probability:
\begin{align*}
\phi_{B-1}(\calX)<\phi_B(\calX)
\quad\text{on }A,    
\end{align*}
while equality holds on $A^c$ where $q(\calX)=1$. 
Taking expectations gives
\begin{align*}
\mathrm{Pow}(B-1)<\mathrm{Pow}(B).    
\end{align*}
Next, since $k_{B+1}=k_B=:k$, we are on a plateau. Couple
\begin{align*}
R_{B+1}=R_B+Y',
\qquad
Y'\sim\mathrm{Bernoulli}(q(\calX)),    
\end{align*}
with $Y'$ independent of $R_B$ given $\calX$. Then
\begin{align*}
\phi_{B+1}(\calX)
&= \mP(R_B+Y'\le k\mid\calX) \\
&= \mP(R_B \le k \mid \calX) - \mP(Y'=1 \mid \calX) \mP(R_B = k \mid \calX) \\
&= \phi_B(\calX) - q(\calX)\,\mP(R_B=k\mid\calX).
\end{align*}
For $q(\calX)\in(0,1)$ and $0\le k\le B$ (note that $k \ge 0$ is naturally guaranteed by the preceding jump $k_B = k_{B-1} + 1 \ge 0$), the binomial mass $\mP(R_B=k\mid\calX)$ is strictly positive, hence
\begin{align*}
\phi_{B+1}(\calX)<\phi_B(\calX)
\quad\text{on }A,    
\end{align*}
with equality again on $A^c$. Taking expectations yields
\begin{align*}
\mathrm{Pow}(B+1)<\mathrm{Pow}(B).    
\end{align*}
Thus $\mathrm{Pow}(B)$ is a strict local maximum, proving part~(i).

\medskip
\noindent For part~(ii), define the backward difference
\begin{align*}
d_B \coloneqq k_B - k_{B-1} \in \{0,1\} \qquad \text{for } B \geq 2.
\end{align*}
Hence $(k_B)$ is nondecreasing and increases by at most one at each step. Moreover,
\begin{align*}
(B+1)\alpha-2<k_B \le (B+1)\alpha-1,    
\end{align*}
so that $k_B/B\to\alpha\in(0,1)$.
 
We first show that both values of $d_B$ occur infinitely often. If $d_B=1$ for all sufficiently large $B$, then for some $B_0$ and all $B>B_0$ we would have
\begin{align*}
k_B
=
k_{B_0}
+
\sum_{j=B_0+1}^{B} d_j
=
k_{B_0}+(B-B_0),
\end{align*}
so that
\begin{align*}
\frac{k_B}{B}
=
1+\frac{k_{B_0}-B_0}{B}
\rightarrow 1,
\end{align*}
contradicting $k_B/B\to\alpha<1$. Thus $d_B=0$ must occur infinitely often. A symmetric argument shows that $d_B=1$ must also occur infinitely often, since if $d_B=0$ eventually, then $k_B$ would be eventually constant and hence $k_B/B\to0$, contradicting $\alpha>0$.
 
We now show that the pattern $(d_{B},d_{B+1})=(1,0)$ occurs infinitely often. Suppose, for contradiction, that it occurs only finitely many times. Then there exists $B_1$ such that for all $B\ge B_1$,
\begin{align*}
d_{B}=1 \;\implies\; d_{B+1}=1.
\end{align*}
In other words, beyond $B_1$, once a jump occurs it must be followed by another jump.
 
Since $d_B=1$ occurs infinitely often, choose $B^\star\ge B_1$ such that $d_{B^\star}=1$. 
By the implication above, we then have
\begin{align*}
d_{B^\star+1}=1,\quad
d_{B^\star+2}=1,\quad
\ldots
\end{align*}
and thus $d_B=1$ for all $B\geq B^\star$. 
But this contradicts the fact that $d_B=0$ occurs infinitely often. Therefore the implication cannot hold, and the pattern $(1,0)$ must occur infinitely many times.
 
By part~(i), each such index $B$ yields a strict local maximum of $\mathrm{Pow}(B)$. Hence the power curve has infinitely many strict local maxima.
\end{proof}

\subsection{Proof of \Cref{prop:decrease_size_uncond}}
\begin{proof} 
Write $q\coloneqq q(\calX)$ and assume $k_{B+1}=k_B=:k$. If $k=-1$, then $\phi_B(\calX)=\mP(R_B\le -1\mid\calX)=0$ and likewise $\phi_{B+1}(\calX)=0$, meaning the power difference is trivially zero and the bound holds. Thus, we may assume $k\ge 0$. Since $\alpha \in (0,1)$, we also naturally have $k = \lfloor(B+1)\alpha\rfloor-1 \le B-1$. 
 
Conditional on $\calX$, we couple the binomial variables as $R_{B+1}=R_B+Y$, where $R_B\mid\calX\sim\mathrm{Binomial}(B,q)$ and $Y\mid\calX\sim\mathrm{Bernoulli}(q)$ are independent. Following the exact same logic as in the proof of \Cref{thm:nonmonotonicity}, the difference in conditional power is exactly $q\,\mP(R_B=k\mid\calX)$. Taking expectations over $\calX$ yields the claimed identity:
\begin{align*}
  \mathrm{Pow}(B)-\mathrm{Pow}(B+1) = \mE\big[q(\calX)\,\mP(R_B=k \mid \calX)\big].    
\end{align*}
To bound this quantity uniformly, let $b(k; B, q) \coloneqq \binom{B}{k} q^k (1-q)^{B-k}$ denote the binomial probability mass function. The expected decrease is bounded by the supremum $\sup_{q\in[0,1]} q\cdot b(k; B, q)$. To find this supremum, we maximize the logarithm of the objective, $\log\bigl(q^{k+1}(1-q)^{B-k}\bigr) = (k+1)\log q + (B-k)\log(1-q)$. Taking the derivative with respect to $q$ and setting it to zero gives
\begin{align*}
  \frac{k+1}{q}-\frac{B-k}{1-q} = 0 \implies (k+1)(1-q) = (B-k)q \implies k+1 = (B+1)q.     
\end{align*}
This vanishes uniquely at $q^\star=(k+1)/(B+1)$. Since the log function is strictly concave on $(0,1)$, this critical point is the global maximizer. 
 
Setting $m\coloneqq k+1$ and $N\coloneqq B+1$, we have $q^\star = m/N$. We can relate the binomial mass at $B$ to the mass at $N$ through the identity $\binom{B}{k} = \binom{B}{m-1} = \frac{m}{B+1}\binom{B+1}{m} = q^\star \binom{N}{m}$. Using this relation, we observe the following crucial simplification:
\begin{align*}
  b(k; B, q^\star)
  &= \binom{B}{k} (q^\star)^k (1-q^\star)^{B-k} \\
  &= \bigg\{ q^\star \binom{N}{m} \bigg\} (q^\star)^{m-1} (1-q^\star)^{N-m} \\
  &= \binom{N}{m} (q^\star)^m (1-q^\star)^{N-m} = b(m; N, q^\star).
\end{align*}
 
We bound this probability mass using Robbins' bounds for Stirling's formula~\citep{robbins1955remark}, which guarantee that $n!$ is bounded between $\sqrt{2\pi}\,n^{n+1/2}e^{-n+1/(12n+1)}$ and $\sqrt{2\pi}\,n^{n+1/2}e^{-n+1/(12n)}$ for any integer $n\ge 1$. Since $0 \le k \le B-1$, both $m \ge 1$ and $N-m \ge 1$ are satisfied, allowing us to apply these bounds to $N!$, $m!$, and $(N-m)!$. The terms $e^{-N}$ in the numerator and $e^{-m}e^{-(N-m)}$ in the denominator cancel out perfectly. Substituting these into the definition of $b(m; N, q^\star)$ and extracting the fractional powers yields
\begin{align*}
  b(m; N, q^\star)
  &\le
  \frac{\sqrt{2\pi}\, N^{N+1/2}}
       {\sqrt{2\pi}\, m^{m+1/2}\, \sqrt{2\pi}\, (N-m)^{N-m+1/2}}
  \exp\Bigl(
    \tfrac{1}{12N}
    - \tfrac{1}{12m+1}
    - \tfrac{1}{12(N-m)+1}
  \Bigr) \\
  &\qquad \times (q^\star)^m (1-q^\star)^{N-m}.
\end{align*}
By recognizing that $(q^\star)^m (1-q^\star)^{N-m} = (m/N)^m ((N-m)/N)^{N-m} = m^m (N-m)^{N-m} / N^N$, we can explicitly cancel all the terms with exponents of $m$, $N-m$, and $N$. After cancellation, the bound simplifies to
\begin{align*}
  b(m; N, q^\star) \le \frac{1}{\sqrt{2\pi}}\sqrt{\frac{N}{m(N-m)}} \exp\biggl(\frac{1}{12N}-\frac{1}{12m+1}-\frac{1}{12(N-m)+1}\biggr).    
\end{align*}
The argument inside the exponential function is strictly negative because
\begin{align*}
  \frac{1}{12m+1}+\frac{1}{12(N-m)+1} \ge \frac{4}{(12m+1)+(12(N-m)+1)} = \frac{2}{6N+1} > \frac{1}{12N}.    
\end{align*}
Consequently, the exponential factor is strictly bounded by $1$. Since $\frac{m}{N} = q^\star$ and $\frac{N-m}{N} = 1-q^\star$, we can rewrite the remaining square root as $(2\pi N\,q^\star(1-q^\star))^{-1/2}$, meaning $b(k; B, q^\star) \le (2\pi N\,q^\star(1-q^\star))^{-1/2}$. Multiplying both sides by $q^\star$ gives
\begin{align*}
  q^\star\,b(k; B, q^\star) \le \frac{1}{\sqrt{2\pi (B+1)}}\sqrt{\frac{q^\star}{1-q^\star}}.    
\end{align*}
Finally, under our plateau assumption $k=\lfloor(B+1)\alpha\rfloor-1$, we have $q^\star \le \alpha$. Because the function $x\mapsto x/(1-x)$ is strictly increasing on $(0,1)$, we may replace $q^\star$ by $\alpha$ to achieve the desired distribution-free bound
\begin{align*}
  \mathrm{Pow}(B)-\mathrm{Pow}(B+1) \leq \frac{1}{\sqrt{2\pi (B+1)}}\sqrt{\frac{\alpha}{1-\alpha}}.    
\end{align*}
This proves the claimed bound.
\end{proof}

\section{Numerical examples}\label{sec: numerical_examples}

\begin{figure}[t!]
\centering
\includegraphics[width=0.9\textwidth]{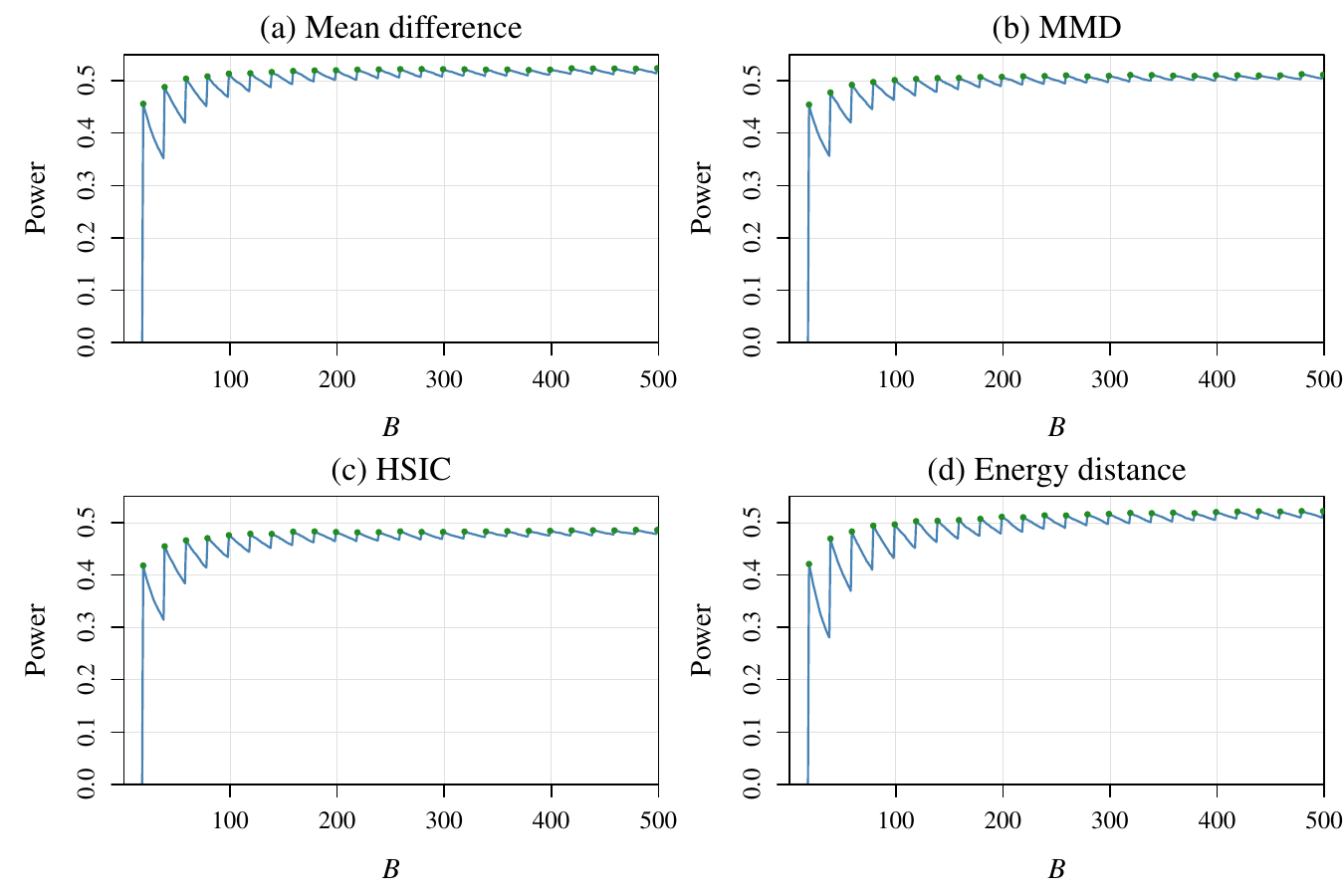}
\caption{Estimated power curves for four permutation test statistics at level $\alpha = 0.05$, based on $N_{\mathrm{sim}} = 10{,}000$ replications per value of $B$. Dots mark the predicted local maxima where $(B+1)\alpha \in \mathbb{N}$.}
\label{fig:numerical_examples}
\end{figure}

The closed-form Bernoulli example in \Cref{sec: concrete example} isolates the sawtooth mechanism in a setting where the unconditional power admits an exact finite-sum representation. In this appendix, we demonstrate that the same non-monotonic behavior arises across a range of commonly used permutation test statistics, covering both two-sample and independence testing problems.

For each test statistic and each $B\in\{1,\ldots,500\}$, we estimate $\mathrm{Pow}(B)$ by drawing $10{,}000$ datasets from the alternative and running the Monte Carlo permutation test at level $\alpha=0.05$.

\vspace{0.4em}
\noindent\textbf{(a) Mean difference. }
A two-sample location problem with $n_1=n_2=20$ per group. Under the alternative, the two samples are drawn from $N(\delta,1)$ and $N(0,1)$ with $\delta=0.55$. The test statistic is the difference in sample means.

\vspace{0.2em}
\noindent\textbf{(b) MMD. }
A kernel two-sample test based on the maximum mean discrepancy (MMD) \citep{gretton2012kernel} with $n_1=n_2=25$ and dimension $d=5$. Under the alternative, the two samples are drawn from $N(\delta\mathbf{1}_d, I_d)$ and $N(\mathbf{0}_d, I_d)$ with $\delta=0.35$, where $\mathbf{1}_d$ and $\mathbf{0}_d$ denote the $d$-dimensional vectors of ones and zeros. The test statistic is the unbiased squared MMD estimator with the Gaussian kernel, with bandwidth set to the median of pairwise distances.

\vspace{0.2em}
\noindent\textbf{(c) HSIC. }
An independence test based on the Hilbert--Schmidt independence criterion (HSIC) \citep{gretton2005measuring} with $n=50$ paired observations. Under the alternative, $X_i\iid N(0,1)$ and $Y_i=X_i^2+\varepsilon_i$ with $\varepsilon_i\iid N(0,\sigma_\varepsilon^2)$ and $\sigma_\varepsilon=2.5$. The test statistic is the empirical HSIC with Gaussian kernels on both marginals. Permutations act by independently reshuffling the $Y$ coordinates while keeping the $X$ coordinates fixed.

\vspace{0.2em}
\noindent\textbf{(d) Energy distance. }
A two-sample test based on the energy distance \citep{szekely2004testing} with $n_1=n_2=20$. Under the null, both samples are drawn from $N(0,1)$. Under the alternative, the two samples are drawn from $N(0,1)$ and $N(0,\sigma_{\mathrm{alt}}^2)$ with $\sigma_{\mathrm{alt}}=1.8$, representing a pure scale shift. The test statistic is the energy distance based on Euclidean distances.

\vspace{0.4em}
\noindent\textbf{Results. }
All four panels of \Cref{fig:numerical_examples} exhibit the sawtooth pattern predicted by \Cref{thm:nonmonotonicity}. In each case, the power increases at values of $B$ where the integer cutoff $k_B$ jumps, and then decreases along the following plateau where the cutoff remains fixed. The observed local maxima occur exactly at the predicted values satisfying
$(B+1)\alpha \in \mathbb{N}$, which are marked by dots in the figure. This agreement supports the view that the oscillation is driven by the discreteness of the Monte Carlo rejection threshold, rather than by model-specific features of the data generating distribution.

The size of the oscillation is most pronounced for small $B$ and becomes progressively smaller as $B$ increases. This behavior is consistent with the $O(B^{-1/2})$ upper bound in \Cref{prop:decrease_size_uncond}. Taken together, these examples indicate that the phenomenon is a structural consequence of using a finite Monte Carlo permutation distribution with a discrete rejection threshold, rather than a peculiarity of any single test statistic or alternative distribution.

\section{Convergence of the Monte Carlo power to the exact-test power}
\label{sec: convergence}
\begin{figure}[t!]
  \centering
  \includegraphics[width=0.9\textwidth]{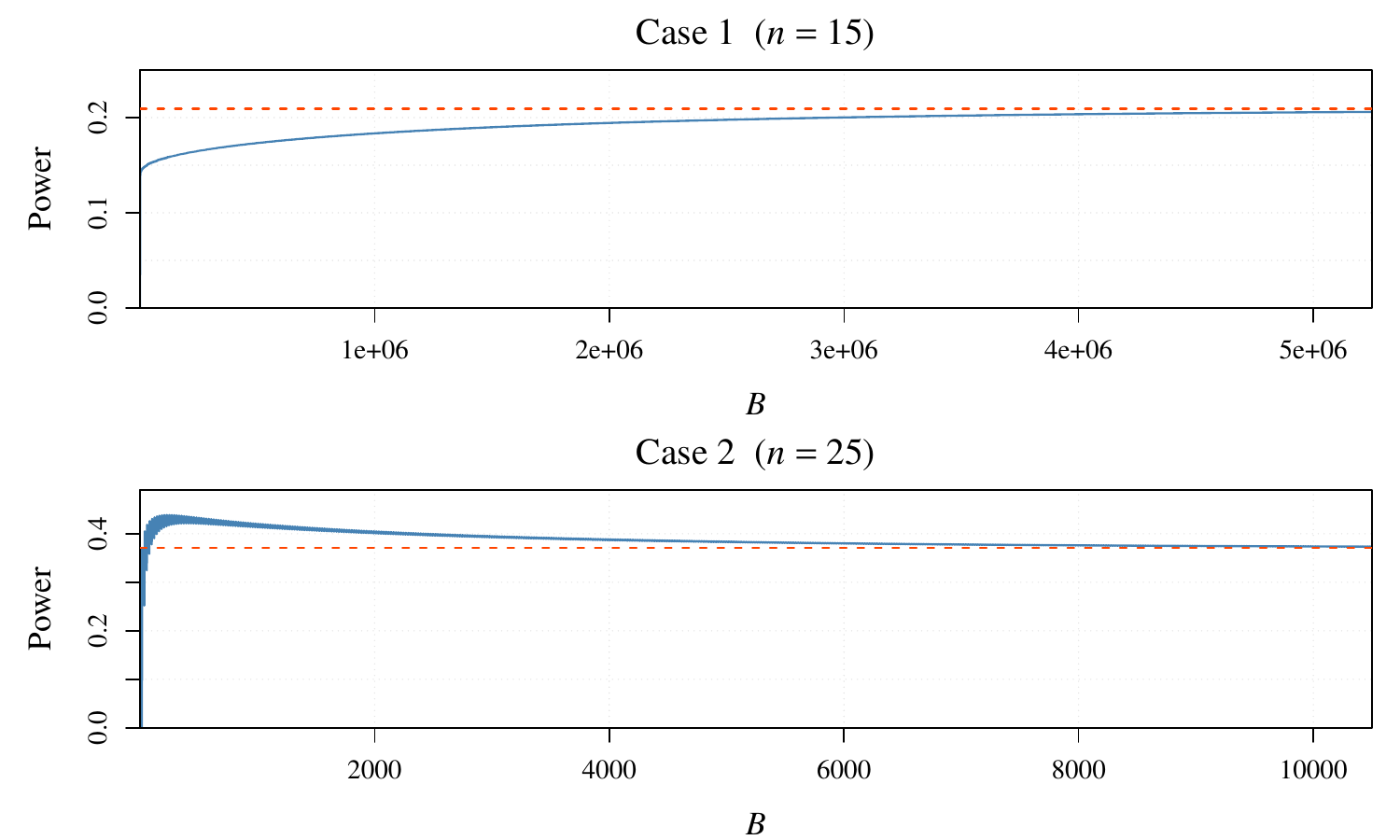}
  \caption{Unconditional power $\mathrm{Pow}(B)$ for the closed-form Bernoulli example at level $\alpha = 0.05$ with $p_0 = 0$ and $p_1 = 0.16$. The top and bottom panels show Case~1 ($n = 15$, $B \leq 5 \times 10^{6}$) and Case~2 ($n = 25$, $B \leq 10^{4}$). Solid and dashed lines represent $\mathrm{Pow}(B)$ and $\mathrm{Pow}_{\mathrm{exact}}$, respectively.}
  \label{fig:convergence}
\end{figure}

To examine how the Monte Carlo power approaches its exact-test counterpart as $B$ grows, we extend the closed-form Bernoulli example of \Cref{sec: concrete example} to $B = 5 \times 10^{6}$ for Case~1 and $B = 10^{4}$ for Case~2. \Cref{fig:convergence} shows the unconditional power over this extended range. In both cases, the Monte Carlo power converges to $\mathrm{Pow}_{\mathrm{exact}}$. The sawtooth oscillations guaranteed by \Cref{thm:nonmonotonicity} persist for all $B$, but their amplitude decays according to the $O(B^{-1/2})$ bound of \Cref{prop:decrease_size_uncond}.

The rate of convergence differs markedly between cases and depends on the proximity of the critical exceedance probability to $\alpha$. In Case~1, $q(4) \approx 0.0498$ is just below $\alpha = 0.05$, so the conditional rejection probability $\mathbb{P}(R_B \leq k_B \mid S = 4)$ is sensitive to $k_B$ and stabilizes only for $B$ around $10^6$. In Case~2, $q(5) \approx 0.0251$ is far from $\alpha$, and convergence is almost complete by $B \approx 10^4$.

\end{document}